%% file: main.tex
\documentclass{IEEEtran}

\usepackage{amsthm}
\usepackage{amsmath, amssymb}
\usepackage{amsfonts}
\usepackage{mathtools} \mathtoolsset{showonlyrefs}
\usepackage{graphicx}
\usepackage{hyperref}
\usepackage{empheq}
\usepackage{siunitx}

\newcommand{\norm}[1]{\lVert #1 \rVert}
\newcommand{\R}{\mathbb{R}}
\newcommand{\N}{\mathbb{N}}
\theoremstyle{plain}
\newtheorem{theorem}{Theorem}

\newtheorem{lemma}{Lemma}
\newtheorem{assumption}{Assumption}
\newtheorem{corollary}{Corollary}

\DeclareMathOperator{\diag}{diag}

\theoremstyle{definition}
\newtheorem{definition}{Definition}
\theoremstyle{remark}
\newtheorem*{remark}{Remark}

\title{\LARGE \bf Over-the-Air Consensus-based Formation Control of Heterogeneous Agents: Communication-Rate and Geometry-Aware Convergence Guarantees}

\author{Michael Epp, Fabio Molinari, Jörg Raisch%
\thanks{Michael Epp, Fabio Molinari and Jörg Raisch are with the Control Systems Group, Technische Universität Berlin, Germany. Email: {\tt\small michael.epp@tu-berlin.de}}%
\thanks{Fabio Molinari is also with Nimbus AI Technologies Inc., Seattle, USA.}
\thanks{Jörg Raisch is also with Science of Intelligence (SCIoI), Research Cluster of Excellence, Berlin, Germany. Email: {\tt\small raisch@control.tu-berlin.de}}
\thanks{This work was funded by the Federal Ministry of Education and Research of Germany joint project 6G-RIC, project identification number 16KISK030.}
}

\begin{document}
\maketitle
\begin{abstract}
This paper investigates the formation control problem of heterogeneous, autonomous agents that communicate over a wireless multiple access channel.
Instead of avoiding interference through orthogonal node-to-node transmissions, we exploit the superposition property of the wireless channel to compute, at each receiver, normalized convex combinations of simultaneously broadcast neighbor signals.
At every communication instant, agents update their reference positions from these aggregates, and track the references in continuous time between updates.
The only assumption on the agent dynamics is that each agent tracks constant reference positions exponentially, which accommodates a broad class of platforms.
Under this assumption, we analyze the resulting jump-flow system under time-varying communication graphs and unknown channel coefficients.
We derive a communication-rate based sufficient condition that guarantees convergence to a prescribed formation.
We then provide a geometry-aware refinement showing how favorable tracking transients can relax the required condition.
Simulations with unicycle agents illustrate the theoretical results and demonstrate a substantial reduction in the number of required orthogonal transmissions compared to interference-avoiding node-to-node communication protocols.
\end{abstract}

\input{src/01_introduction}
\input{src/02_preliminaries}
\input{src/03_communication}
\input{src/04_control}
\input{src/05_analysis}
\input{src/06_simulation}
\input{src/07_conclusion}

\bibliographystyle{ieeetr}
\bibliography{lib}
\end{document}

%% file: src/01_introduction.tex
\section{Introduction} \label{sec:introduction}
Autonomous multi-agent systems (MAS) have gained significant attention in science and engineering, motivated by applications such as cooperative robotics, unmanned vehicle teams, and networked sensing.
A key driver is that many collective objectives can be achieved without centralized coordination by using distributed control laws that rely only on local information exchange, while remaining scalable and robust to failure and topology changes~\cite{ren2007, olfati-saber2007, gulzar2018}.

Formation control is a central coordination problem in MAS, where the objective is to achieve and maintain a prescribed spatial arrangement using only local interaction rules.
Two broad paradigms are commonly considered.
Inter-agent measurements can be used to encode the desired formation in relative measurements between neighboring agents, without a global reference variable~\cite{dimarogonas2008,falconi2015}.
On the other hand, in centroid-based formation control, agents agree on a common reference point and realize the formation through prescribed displacement vectors with respect to that reference~\cite{ren2007b, xie2014}.
In the latter case, distributed consensus mechanisms are a natural tool to establish agreement on the shared reference point.
Consensus algorithms under directed and time-varying interaction topologies have been extensively studied, providing conditions that connect connectivity properties of the communication graph to convergence guarantees~\cite{olfati-saber2007, chatterjee1977, tahbaz-salehi2008}.

In many robotic systems, wireless communication resources are limited.
Traditional multi-agent coordination controls often presume that pairwise communication can be implemented reliably, implicitly relying on interference-avoidance mechanisms.
In dense swarms, however, allocating orthogonal channels to all links is costly and can introduce latency that degrades closed-loop performance.
An alternative is leveraging the superposition property of wireless signals, a concept known as over-the-air (OTA) computation~\cite{goldenbaum2013}.
This approach, identified as a promising candidate technology for 6G wireless communication systems~\cite{wang2024}, turns traditional interference challenges into a communication advantage.

For MAS, OTA computation naturally enables communication-efficient consensus control.
In particular, \cite{molinari2019} proposed a consensus protocol that uses simultaneous broadcasts and receiver-side normalization so that each agent can obtain a convex combination of its neighbors' values without decoding individual transmissions, reducing the need of orthogonal channels and offering robustness to unknown channel gains.
Building on this principle, consensus-based formation control for single-integrator agents was studied in~\cite{molinari2019} and later extended to include collision avoidance in~\cite{epp2024}.
The consensus-based formation control of non-holonomic agents was studied in~\cite{borzone2018}, which established sufficient conditions under which a swarm of agents establishes the formation.

This paper develops an OTA, consensus-based formation control scheme for heterogeneous agents under sampled wireless communication.
Rather than assuming identical agent dynamics, we require only that each agent's position is exponentially stabilizable to any constant reference, covering a broad range of platforms, including unicycle and differential drive models~\cite{dewit1992} and car-like models~\cite{sordalen1995}.
For the resulting jump-flow system, we derive a communication-rate based sufficient condition for convergence to the desired formation and provide a geometry-aware refinement that captures how the tracking transient can relax the required communication-rate bound, thereby generalizing earlier formation control results in~\cite{molinari2019} and~\cite{borzone2018}.

The remainder of this paper is structured as follows.
Notation is summarized at the end of this section.
Section~\ref{sec:preliminaries} introduces the problem formulation and basic assumptions. 
In Section~\ref{sec:communication}, the OTA broadcast protocol and its properties are described.
A control strategy for heterogeneous agents is proposed in Section~\ref{sec:control}.
The closed-loop system is analyzed and convergence conditions are established in Section~\ref{sec:analysis}.
Simulation results are shown in Section~\ref{sec:simulation}, and Section~\ref{sec:conclusion} concludes the paper.

\subsection*{Notation}
Throughout this paper, $\R$, $\R_{>0}$, and $\R_{\geq 0}$ denote the sets of real, positive real and nonnegative real numbers, respectively. 
$\N_0$ and $\N$ will denote the nonnegative and positive integers, respectively. 
$I_n$ is the identity matrix of size $n\times n$, while $1_n$ is a column vector of size $n$ with every element equal to $1$.
We denote the Euclidean norm by $\norm{.}$.

Given a matrix $A$, its transpose is written as $A^\top$.
The entry in position $(i,j)$ of matrix $A$ is denoted $[A]_{ij}$. 
A matrix $A$ is positive if $\forall (i,j),\, [A]_{ij}>0$, and nonnegative if $[A]_{ij}\geq 0$. 
A square matrix $A\in\R^{n\times n}$ is called \emph{row-stochastic} if it is nonnegative and satisfies $A1_n = 1_n$.
The coefficient of ergodicity of a row-stochastic matrix $A$ is defined as $\tau_1(A)=\frac{1}{2}\max_{i,j}\sum_{s=1}^n |A_{is}-A_{js}|$.

A \emph{directed graph} $\mathcal{G}$ is a pair $(\mathcal{N}, \mathcal{A})$, where $\mathcal{N}$ is the set of nodes and $\mathcal{A}\subseteq \mathcal{N}\times\mathcal{N}$ is the set of arcs, i.e., $(i,j)\in\mathcal{A}$ if and only if an arc goes from node $i\in\mathcal{N}$ to node $j\in\mathcal{N}$. 
Node $j$ is a neighbor of node $i$ if $(j,i)\in\mathcal{A}$, and $\mathcal{N}_i = \left\{ j\in\mathcal{N} \,\vert\, (j,i)\in\mathcal{A}  \right\}$ denotes the set of neighbors of node $i$.
A path from node $i$ to node $j$ is a sequence of arcs 
\begin{equation}
    (l_0,l_1),(l_1,l_2),\dots,(l_{p-1},l_p),
\end{equation}
with $p\in\N$, $l_0=i$ and $l_p=j$. 
The graph $\mathcal{G}$ is strongly connected if $\forall i,j\in\mathcal{N}$, there exists a path from node $i$ to node $j$.
A weighted directed graph is a triple $\mathcal{G} = (\mathcal{N}, \mathcal{A}, w)$, where $w:\mathcal{A}\to\R_{>0}$ assigns a positive weight to each arc.
Let $n=|\mathcal{N}|$, the matrix $\mathcal{W}\in\R_{\geq 0}^{n\times n}$, with $[\mathcal{W}]_{ji}=w\left((i,j)\right)$ if $(i,j)\in\mathcal{A}$ and $[\mathcal{W}]_{ji}=0$ otherwise, is called the adjacency matrix of $\mathcal{G}$.
In this paper, we use weighted directed graphs to model the communication topology in multi-agent systems.
We will identify each agent with a node from $\mathcal{G}$, and $(i,j)\in\mathcal{A}$ indicates that agent $i$ can transmit information to agent $j$.
Allowing for time-varying communication topologies, we write $\mathcal{G}(t) = (\mathcal{N}, \mathcal{A}(t), w(t))$ and $\mathcal{N}_i(t)=\{j\in\N, \mid (j,i)\in\mathcal{A}(t)\}$.
We consider communication instants $\{t_k\}, k\in\N_0$.
For any time-varying quantity $x(t)$, we denote its sampled value at $t_k$ by $x_k \coloneqq x(t_k)$.

%% file: src/02_preliminaries.tex
\section{Preliminaries} \label{sec:preliminaries}
Let $\mathcal{N}=\{1,\dots,n\}$ index a set of heterogeneous autonomous agents moving on a two-dimensional plane.
Agent $i$ has position $p_i(t)\in\R^2$ and control input $u_i(t)\in\R^{m_i}$, with $m_i\in\N$, and evolves according to
\begin{align} 
\dot{x}_i(t) &= f_i \left(t, x_i(t), u_i(t) \right) , \quad x_i(t)\in\R^{n_i} \label{eq:agent_dynamics_in}\\
p_i(t) &= g_i(t,x_i(t)), \label{eq:agent_dynamics_out}
\end{align}
where $f_i$ and $g_i$ are functions of appropriate dimensions.
This formulation allows heterogeneous agent dynamics, while the formation objective described in terms of the position output $p_i(t)$.
Moreover, we assume the agents are dynamically independent in the sense that~\eqref{eq:agent_dynamics_in} and \eqref{eq:agent_dynamics_out} do not depend on other agents' states.
Inter-agent coupling is solely introduced through the feedback law used to generate $u_i(t)$.
Additionally, we impose the following assumption on the controller.
\begin{assumption} \label{as:expo_stable}
    For each agent $i\in\mathcal{N}$ there exist constants $C_i,\lambda_i\in\R_{>0}$ such that, for any constant reference position $r_i\in\R^2$, there exists an input signal $u_i(t)$ for which the corresponding solution satisfies, for all $t\geq\tau\geq 0$,
    \begin{equation}
        \| p_i(t) - r_i \| \leq C_i e^{-\lambda_i (t-\tau)} \| p_i(\tau)-r_i \|.
    \end{equation}
\end{assumption}
\begin{remark}
    At this point, we only impose exponential stability of an agent's position to its reference. 
    This assumption is satisfied by many common mobile robot models such as differential drive and car-like kinematics under regulation control~\cite{dewit1992, sordalen1995}.
    
    However, we do not impose any additional geometric properties on the resulting trajectories.
    In Section~\ref{sec:analysis}, we derive convergence conditions that admit a geometric interpretation, indicating how properties of the tracking behavior affect consensus.
\end{remark}

The agents exchange information at sampling times $t_k\in\R_{\geq 0}, k\in\N_0$, with uniformly bounded inter-sampling times, i.e., there exist $\check T, \hat T \in\R_{>0}$, such that
\begin{equation}
\forall k\in\N_0,~ \check{T} \leq t_{k+1}-t_k \leq \hat{T}.
\end{equation}

We aim to achieve a prescribed formation described by displacement vectors $d_i\in\R^2$.
This is achieved asymptotically if, 
\begin{equation} \label{eq:form_cond_convergence}
    \forall i\in\mathcal{N}, \quad \lim_{t\to\infty} p_i(t) = c(t)+d_i,
\end{equation}
where $c(t)\in\R^2$ is the so-called centroid of the formation.

%% file: src/03_communication.tex
\section{Communication Model} \label{sec:communication}
This study adopts the WMAC model to describe the superposition property (interference) of the wireless medium, i.e., the effect of multiple signals being simultaneously transmitted in the same frequency band. 
Interference has traditionally been avoided by using more wireless resources, e.g., by creating orthogonal transmission (each agent is assigned its own time or frequency slot). 
Instead of avoiding interference, however, we let agents broadcast simultaneously, giving rise to the following definition.
\begin{definition}[Wireless Multiple Access Channel] \label{def:wmac}
    Let all agents in  $\mathcal{N}$ simultaneously broadcast $\alpha_j(t_k)\in\mathbb{R}$. 
    The obtained superimposed value at the receiving agent is
    \begin{equation} \label{eq:wmac_def}
        y_i(t_k) = \sum_{j\in\mathcal{N}_i(t_k)} \xi_{ij}(t_k)\alpha_j(t_k),
    \end{equation}
    where $\xi_{ij}(t_k)\in (0,1]$ are unknown time-varying channel coefficients.
    Since the received superimposed value is a result of power modulation, the channel coefficients are assumed to be positive.
\end{definition}

For normalization purposes,~\cite{molinari2018} proposed to additionally broadcast a known value, e.g., $1$ via an orthogonal channel (i.e., in a different time slot or frequency range).
To transmit a vector $\mu_j\in\mathbb{R}^m$, we apply the WMAC model element-wise and allocate an orthogonal channel to each component.
If we use TDMA (Time Division Multiple Access), we assume that the delays between the individual broadcasts of the entries of the vector $\mu_j$ and the normalization signal are so small that they can be considered to occur at the same time $t_k$, i.e., the channel coefficients will not change between these broadcasts.

According to the WMAC model and using the notation $\nu_i(t_k)=\nu_{i,k}$, all agents $i\in\mathcal{N}$ then receive for all $k\in\N_0$ 
\begin{align}
    \nu_{i,k} &= \sum_{j\in\mathcal{N}_{i,k}} \xi_{ij,k}\mu_{j,k} \\
    \nu_{i,k}' &= \sum_{j\in\mathcal{N}_{i,k}} \xi_{ij,k}.
\end{align}
In the following, we will assume that $\forall k\in\N_0$, $\forall i\in\mathcal{N}, i\in\mathcal{N}_i$, i.e., $\xi_{ii,k}\in\R_{>0}$.
Therefore $\nu_{i,k}'$ is positive and the vector
\begin{equation}
    \zeta_{i,k} \coloneqq \frac{\nu_{i,k}}{\nu_{i,k}'}
\end{equation}
is well defined. 
Clearly,
\begin{equation}
    \zeta_{i,k} = \sum_{j\in\mathcal{N}_{i,k}} h_{ij,k} \mu_{j,k},
\end{equation}
where
\begin{equation}
    h_{ij,k} = \begin{cases}
        \frac{\xi_{ij,k}}{\sum_{j\in\mathcal{N}_{i,k}} \xi_{ij,k}} & \text{if } (j,i)\in\mathcal{A} \\
        0 & \text{otherwise}
    \end{cases}
    \label{eq:h_def}
\end{equation}
are the normalized channel coefficients.

By construction, $h_{ij,k}\in[0,1]$, and
\begin{equation}
    \sum_{j=1}^n h_{ij,k} = 1. \label{eq:h_sum}
\end{equation}
Hence, $\zeta_{i,k}$ is in the convex hull of the $\mu_{j,k}$, $j\in\mathcal{N}$.
It encapsulates aggregated information from neighboring agents.

\begin{corollary}
    For all $k\in\N_0$, the matrix $H_k$, with $[H_k]_{ij}=h_{ij,k}$, is row-stochastic.
\end{corollary}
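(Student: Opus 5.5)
We verify the two defining properties of a row-stochastic matrix, nonnegativity and unit row sums, directly from the definition of the normalized channel coefficients in \eqref{eq:h_def}. Fix $k\in\N_0$ and a row index $i\in\mathcal{N}$. The only structural fact we need beforehand is that the normalizing denominator is strictly positive. This follows from the standing assumption $i\in\mathcal{N}_{i,k}$, which gives $\xi_{ii,k}>0$, together with Definition~\ref{def:wmac}, which gives $\xi_{ij,k}\in(0,1]$ for every $j\in\mathcal{N}_{i,k}$. Hence $\nu_{i,k}'=\sum_{j\in\mathcal{N}_{i,k}}\xi_{ij,k}\geq\xi_{ii,k}>0$, so every entry $h_{ij,k}$ is well defined.

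\emph{Nonnegativity.} For $j\in\mathcal{N}_{i,k}$, the entry $h_{ij,k}$ is a quotient of a positive number by a positive number, so it lies in $(0,1]$; it is at most $1$ because the numerator is one of the summands of the denominator. For $j\notin\mathcal{N}_{i,k}$, the entry is $0$ by definition. Thus $[H_k]_{ij}\geq 0$ for all $i,j$.

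\emph{Row sums.} Split the sum over $j=1,\dots,n$ into indices in $\mathcal{N}_{i,k}$ and indices outside it. The second group contributes zero, and the first group gives
\begin{equation}
\sum_{j\in\mathcal{N}_{i,k}}\frac{\xi_{ij,k}}{\nu_{i,k}'}=\frac{\nu_{i,k}'}{\nu_{i,k}'}=1.
\end{equation}
This is exactly \eqref{eq:h_sum}, and in matrix form it reads $H_k1_n=1_n$. Since $i$ and $k$ were arbitrary, $H_k$ is row-stochastic for every $k\in\N_0$.

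There is no real obstacle here. The only point needing care is a notational one: in \eqref{eq:h_def} the case condition $(j,i)\in\mathcal{A}$ must be read as $(j,i)\in\mathcal{A}(t_k)$, i.e.\ $j\in\mathcal{N}_{i,k}$. With that reading the case distinction matches the summation set of the denominator, and the sum collapses to $1$ as shown.
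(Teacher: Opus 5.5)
Your proof is correct and follows the same route as the paper, which obtains the corollary directly from the construction: the denominator in \eqref{eq:h_def} is positive because $\xi_{ii,k}>0$, so $h_{ij,k}\in[0,1]$, and each row sums to one as in \eqref{eq:h_sum}. Your remark that the case condition in \eqref{eq:h_def} should be read as $j\in\mathcal{N}_{i,k}$, i.e.\ with the time-varying arc set $\mathcal{A}(t_k)$, is a fair clarification of the paper's notation.
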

\begin{assumption} \label{as:mixing}
    There exist $L\in\N$ and $\mu\in\R>0$ such that for all $k\in\N_0$
    \begin{equation}
        \tau_1(H_{k+L-1}\cdots H_k)\leq 1-\mu.
    \end{equation}
\end{assumption}
\begin{remark}
    Assumption~\ref{as:mixing} is a standard connectivity requirement for consensus over time-varying networks.
    Within every block of $L$ consecutive updates, the product $H_{k+L-1}\cdots H_k$ is sufficiently ergodic to ensure a uniform contraction of disagreement.
    The assumption is satisfied, for example, under joint strong connectivity of the communication graphs together with a uniform positive lower bound on the nonzero weights of $H_k$.
    These conditions are classical in distributed control and exclude switching patterns that keep the network essentially disconnected over arbitrarily long periods.
\end{remark}

%% file: src/04_control.tex
\section{Control Strategy} \label{sec:control}
In this section we present the reference-update rule induced by the communication model described in Section~\ref{sec:communication}.
For each communication instant $t_k$, let $p_{i,k^+}$ and $r_{i,k^+}$ denote, respectively, the position of agent $i$ and its reference immediately after the update.
The resulting closed-loop system consists of jump and flow dynamics, accounting for the discrete-time communication instants and the continuous-time motion in between them.

\subsection*{Jump Dynamics}
At each communication instant $t_k$, agents broadcast their displacement-shifted position $p_{i,k}-d_i$.
The reference position for agent $i$ is then updated as, for all $k\in\N_0$,
\begin{equation} \label{eq:jump_r}
    r_{i,k^+} = d_i + \sum_{j=1}^n h_{ij,k} (p_{j,k}-d_j).
\end{equation}
Since the physical position does not jump at communication instants, we have, for all $k\in\N_0$, 
\begin{equation} \label{eq:jump_p}
    p_{i,k^+}=p_{i,k}.
\end{equation}

\subsection*{Flow Dynamics}
Between two adjacent communication instants, the references are held constant, i.e., for all $t\in(t_k,t_{k+1}], k\in\N_0$,
\begin{equation} \label{eq:flow_r}
    r_{i}(t) = r_{i,k^+},
\end{equation}
and in particular $r_{i,k+1}=r_{i,k^+}$.
Under Assumption~\ref{as:expo_stable}, the agents then converge exponentially to their individual reference positions, i.e., for all $k\in\N_0$,
\begin{equation} \label{eq:flow_p}
    \| p_{i,k+1}-r_{i,k^+} \| \leq C_i e^{-\lambda_i (t_{k+1}-t_k)} \| p_{i,k}-r_{i,k^+} \|.
\end{equation}

%% file: src/05_analysis.tex
\section{Convergence Analysis} \label{sec:analysis}
In this section we first provide a general bound on $\check T$ under which the system achieves convergence and establishes the desired formation.
We then continue the analysis by exploiting additional geometric information about the tracking transient.

\subsection*{General Convergence Condition}
We start by introducing the displacement-shifted state variables $\tilde p_i(t) \coloneqq p_i(t)-d_i$ and $\tilde r_{i,k} \coloneqq r_{i,k}-d_i$.
The jump-flow system in~\eqref{eq:jump_r} to~\eqref{eq:flow_p} can then be rewritten as
\begin{align}
    \tilde r_{i,k^+} &= \sum_{j=1}^n h_{ij,k} \tilde p_{j,k}, \label{eq:jump_r_tilde}\\
    \tilde p_{i,k^+} &= \tilde p_{i,k}
\end{align}
at jumps for all $k\in\N_0$ and
\begin{align}
    \tilde r_i(t) &= \tilde r_{i,k^+}, \label{eq:flow_r_tilde} \\
    \| \tilde p_{i,k+1}-\tilde r_{i,k^+} \| &\leq C_i e^{-\lambda_i(t_{k+1}-t_k)} \| \tilde p_{i,k} - \tilde r_{i,k^+} \| \label{eq:flow_p_tilde}
\end{align}
during flow intervals for all $t\in (t_k,t_{k+1}]$.
By~\eqref{eq:form_cond_convergence}, the desired formation is therefore achieved if and only if $\forall i,j,\, \tilde p_i(t)=\tilde p_j(t)$.

Additionally, we denote stacked vectors by omitting the agent related subscript, i.e., we have $\tilde p_k = [\tilde p_{1,k}^\top,\dots,\tilde p_{n,k}^\top]^\top$ and $\tilde r_k = [\tilde r_{1,k}^\top,\dots,\tilde r_{n,k}^\top]^\top$.

In order to analyze the disagreement in the system, we define the seminorm
\begin{equation} \label{eq:Delta_def}
    \Delta(x) = \max_{i,j} \| x_i-x_j \|,
\end{equation}
for any stacked vector $x = [x_1^\top,\dots,x_n^\top]^\top$, $\forall i,\, x_i\in\R^2$.
It satisfies the usual seminorm properties
\begin{align}
    \Delta(x) &\geq 0, \\
    \Delta(\alpha x) &= |\alpha|\Delta(x),\,\alpha\in\R, \\
    \Delta(x + y) &\leq \Delta(x) + \Delta(y),
\end{align}
and in addition 
\begin{equation}
    \Delta(x)=0 \iff \forall i,j,\, x_i = x_j.
\end{equation}
The seminorm~\eqref{eq:Delta_def} can therefore be used as a measure of disagreement in a system and is zero exactly when that system reaches consensus.
Additionally, for any row-stochastic matrix $A\in\R^{n\times n}$, we have by~\cite[Theorem 3.1]{seneta2006},
\begin{equation}
    \Delta((A\otimes I_2)x) \leq \tau_1(A)\Delta(x).
\end{equation}

With these definitions at hand, we are able to state the following theorem.
\begin{theorem} \label{thm:general}
    Suppose Assumptions~\ref{as:expo_stable} and~\ref{as:mixing} are satisfied.
    Then there exists a constant $\check T^* \in\R_{>0}$, such that if $\check T > \check T^*$, i.e., 
    \begin{equation}
        \forall k,\, t_{k+1}-t_k > \check T^*,
    \end{equation}
    the jump-flow system in~\eqref{eq:jump_r} to~\eqref{eq:flow_p} achieves the desired formation in the sense of~\eqref{eq:form_cond_convergence}.
\end{theorem}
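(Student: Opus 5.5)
Our plan is to treat the sampled sequence $\tilde p_k$ as a perturbed consensus iteration and show that $\Delta(\tilde p_k)$ contracts geometrically over every block of $L$ communication instants once the inter-sampling time is large enough. Write the sampled flow step as
\begin{equation}
    \tilde p_{k+1} = \tilde r_{k^+} + e_k, \qquad \tilde r_{k^+} = (H_k\otimes I_2)\tilde p_k,
\end{equation}
where $e_k$ is the stacked tracking error. With $C=\max_i C_i$, $\lambda=\min_i\lambda_i$ and $\varepsilon \coloneqq C e^{-\lambda \check T}$, \eqref{eq:flow_p_tilde} gives $\|e_{i,k}\|\le \varepsilon \|\tilde p_{i,k}-\tilde r_{i,k^+}\|$. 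Since $\tilde r_{i,k^+}$ is a convex combination of the $\tilde p_{j,k}$, we have $\|\tilde p_{i,k}-\tilde r_{i,k^+}\|\le \Delta(\tilde p_k)$. Hence $\max_i\|e_{i,k}\|\le \varepsilon\Delta(\tilde p_k)$, and therefore $\Delta(e_k)\le 2\varepsilon\Delta(\tilde p_k)$.

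The one-step estimate then follows from the seminorm properties and \cite[Theorem 3.1]{seneta2006}:
\begin{equation}
    \Delta(\tilde p_{k+1}) \le \tau_1(H_k)\Delta(\tilde p_k) + 2\varepsilon\Delta(\tilde p_k) \le (1+2\varepsilon)\Delta(\tilde p_k),
\end{equation}
so in the worst case the disagreement grows by at most a controlled factor per step. Over a block of length $L$ we unroll the recursion:
\begin{equation}
    \tilde p_{k+L} = (H_{k+L-1}\cdots H_k\otimes I_2)\tilde p_k + \sum_{s=0}^{L-1} (H_{k+L-1}\cdots H_{k+s+1}\otimes I_2) e_{k+s}.
\end{equation}
Assumption~\ref{as:mixing} bounds the first term by $(1-\mu)\Delta(\tilde p_k)$. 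Each summand is bounded by $2\varepsilon\Delta(\tilde p_{k+s})\le 2\varepsilon(1+2\varepsilon)^s\Delta(\tilde p_k)$, because $\tau_1\le1$ for stochastic products. Thus
\begin{equation}
    \Delta(\tilde p_{k+L})\le \rho(\varepsilon)\Delta(\tilde p_k),\qquad \rho(\varepsilon)=1-\mu+\big((1+2\varepsilon)^L-1\big).
\end{equation}
Since $\rho$ is continuous and increasing with $\rho(0)=1-\mu<1$, there is $\varepsilon^*>0$ with $\rho(\varepsilon^*)<1$. Setting $\check T^*=\max\{0,\lambda^{-1}\ln(C/\varepsilon^*)\}$ guarantees $\varepsilon<\varepsilon^*$ whenever every inter-sampling time exceeds $\check T^*$. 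Strictly we need $\check T^*>0$, which is harmless: we may enlarge it. With the one-step bound between blocks, this yields $\Delta(\tilde p_k)\to0$ geometrically.

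It remains to pass from sampled disagreement to convergence in the sense of \eqref{eq:form_cond_convergence}.

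\textbf{Sampled convergence of the centroid.} Define $c_k$ as, say, $\tilde p_{1,k}$. Then $\|\tilde p_{i,k+1}-\tilde p_{i,k}\|\le \|\tilde r_{i,k^+}-\tilde p_{i,k}\|+\|e_{i,k}\|\le (1+\varepsilon)\Delta(\tilde p_k)$. Because $\Delta(\tilde p_k)$ is summable, each $\tilde p_{i,k}$ is Cauchy and converges to a common limit $c^*$.

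\textbf{Inter-sample behaviour.} For $t\in(t_k,t_{k+1}]$, Assumption~\ref{as:expo_stable} with $\tau=t_k$ gives $\|\tilde p_i(t)-\tilde r_{i,k^+}\|\le C\Delta(\tilde p_k)\to0$. Since $\tilde r_{i,k^+}\to c^*$ as well, $p_i(t)-d_i\to c^*$ for all $i$. This establishes the formation, with a constant centroid $c(t)\equiv c^*$.

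The main obstacle I expect is the block-wise bound on the error accumulation. The mixing assumption only contracts over $L$ steps, while individual steps may expand disagreement. This is why the growth factor $(1+2\varepsilon)^L$ must be controlled, and why the resulting $\check T^*$ depends on $L$, $\mu$, $C$ and $\lambda$. A subtlety to check is that Assumption~\ref{as:expo_stable} is applied with the reference switched at each $t_k$. This is legitimate because the constants are uniform in $r_i$ and the initial time $\tau$.
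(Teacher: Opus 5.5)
Your proposal is correct and follows the paper's proof essentially step for step: the paper uses the same per-agent bound $\|\tilde p_{i,k+1}-\tilde r_{i,k^+}\|\le \varepsilon\Delta(\tilde p_k)$, the same perturbed recursion $\tilde p_{k+1}=(H_k\otimes I_2)\tilde p_k+e_k$ and the same $L$-step factor $1-\mu+\big((1+2\varepsilon)^L-1\big)$, and it reads off the explicit threshold $\check T^*=-\frac{1}{\check\lambda}\ln\frac{(1+\mu)^{1/L}-1}{2\hat C}$ from the critical value where this factor equals $1$ rather than invoking continuity. Your last step, showing that summability of $\Delta(\tilde p_k)$ gives a common limit $c^*$ and that the inter-sample bound from Assumption~\ref{as:expo_stable} yields convergence for all $t$, goes beyond the paper, which stops at $\Delta(\tilde p_k)\to 0$, and correctly closes the passage from sampled to continuous-time convergence.
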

\begin{proof}
    By substituting~\eqref{eq:jump_r_tilde} in~\eqref{eq:flow_p_tilde}, we obtain, for all $i\in\mathcal{N}$,
    \begin{align}
        \| \tilde p_{i,k+1}-\tilde r_{i,k^+} \| &\leq C_i e^{-\lambda_i(t_{k+1}-t_k)} \bigg\| \tilde p_{i,k} - \sum_{j=1}^n h_{ij,k} \tilde p_{j,k} \bigg\| \nonumber\\
        &\leq C_i e^{-\lambda_i \check T} \Delta(\tilde p_k),
    \end{align}
    where we used that
    \begin{align}
        \bigg\| \tilde p_{i,k} - \sum_{j=1}^n h_{ij,k} \tilde p_{j,k} \bigg\| &\leq \max_l \bigg\| \tilde p_{i,k} - \sum_{j=1}^n h_{ij,k} \tilde p_{l,k} \bigg\| \nonumber\\
        &= \max_l \| \tilde p_{i,k} - \tilde p_{l,k} \| 
        \leq \Delta(\tilde p_k)
        \label{eq:p_r_Delta_ineq}
    \end{align}
    and $e^{-\lambda_i(t_{k+1}-t_k)}\leq e^{-\lambda_i \check T}$.
    We now define 
    \begin{equation} \label{eq:beta_i_def}
        \beta_i = C_i e^{-\lambda_i \check T}
    \end{equation}
    and $\hat\beta = \max_i \beta_i$ to obtain, for all $i\in\mathcal{N}$,
    \begin{equation}
        \| \tilde p_{i,k+1}-\tilde r_{i,k^+} \| \leq \hat\beta \Delta(\tilde p_k).
    \end{equation}
    By defining $w_k = \tilde p_{k+1} - \tilde r_{k^+}$ and making use of the subadditivity property of $\|\cdot\|$, we hence obtain
    \begin{align}
        \Delta(w_k) &= \max_{i,j} \| (\tilde p_{i,k+1}-\tilde r_{i,k^+}) - (\tilde p_{j,k+1}-\tilde r_{j,k^+}) \| \nonumber\\
        &\leq 2\max_{i} \| \tilde p_{i,k+1}-\tilde r_{i,k^+} \| \nonumber\\
        &\leq 2\hat\beta \Delta(\tilde p_k).
        \label{eq:Delta_w}
    \end{align}

    Using~\eqref{eq:jump_r_tilde}, we can write $\tilde p_{k+1} = (H_k\otimes I_2)\tilde p_k + w_k$.
    By iterating this recursion over an $L$-step window, applying $\Delta(\cdot)$, and making use of its properties, we obtain
    \begin{align}
        \Delta(\tilde p_{k+L}) &\leq \tau_1\bigg(\prod_{\ell=1}^L H_{k+L-\ell}\bigg)\Delta(\tilde p_k) \nonumber \\
        &\quad + \sum_{i=1}^L \tau_1\bigg(\prod_{\ell=1}^{i-1} H_{k+L-\ell} \bigg)\Delta(w_{k+L-i}) \nonumber\\
        &\leq (1-\mu)\Delta(\tilde p_k) + \sum_{i=1}^L \Delta(w_{k+L-i}) \nonumber\\
        &\leq (1-\mu)\Delta(\tilde p_k) + 2\hat\beta \sum_{i=1}^L \Delta(\tilde p_{k+L-i}) \nonumber\\
        &\leq (1-\mu)\Delta(\tilde p_k) + 2\hat\beta \sum_{i=1}^L (1+2\hat\beta)^{i-1} \Delta(\tilde p_{k}) \nonumber\\
        &= (1-\mu)\Delta(\tilde p_k) + \big( (1+2\hat\beta)^L -1 \big) \Delta(\tilde p_{k}),
    \end{align}
    where in the first inequality we used the convention $\prod_{\ell=1}^{i-1}H_{k+L-\ell}=H_{k+L-1}\cdots H_{k+L-i+1}$ and for $i=0$ this product is empty and therefore equal to the identity matrix. 
    In the third inequality we used the one-step relation
    \begin{align}
        \Delta(\tilde p_{k+1}) &\leq \tau_1(H_k)\Delta(\tilde p_k) + \Delta(w_k) \nonumber\\
        &\leq (1+2\hat\beta) \Delta(\tilde p_k).
    \end{align}

    In order to establish convergence to the formation, it therefore suffices to require
    \begin{equation}
        (1-\mu) + \big( (1+2\hat\beta)^L -1 \big) < 1
    \end{equation}
    or equivalently
    \begin{equation}
        \hat\beta < \frac{1}{2}\big((1+\mu)^{\tfrac{1}{L}}-1\big).
    \end{equation}
    Since $\hat\beta \leq \hat{C} e^{-\check\lambda \check T}$, where $\hat{C}=\max_i C_i$ and $\check\lambda=\min_i \lambda_i$, a sufficient condition for convergence is
    \begin{equation} \label{eq:conv_cond_general}
        \check T>\check T^* \coloneqq -\frac{1}{\check\lambda}\ln\frac{(1+\mu)^{\tfrac{1}{L}}-1}{2 \hat{C}}.
    \end{equation}
    This concludes the proof of Theorem~\ref{thm:general}.
\end{proof}

The jump-flow system in~\eqref{eq:jump_r} to~\eqref{eq:flow_p} therefore converges if the time in between communication instants is sufficiently large.
However, the requirement in~\eqref{eq:conv_cond_general} can be conservative in practice since increasing the inter-communication time may be undesirable, as in many settings the sampling schedule is fixed, and increasing $\check T$ might lead to reduced responsiveness of the system.
In the following, we therefore show how the geometric features of the tracking transient affect the convergence.

\subsection*{Geometry-Aware Convergence Condition}
We start by relaxing Assumption~\ref{as:expo_stable}.
\begin{assumption} \label{as:expo_stable_relaxed}
    For each agent $i\in\mathcal{N}$ there exist constants $C_i,\lambda_i\in\R_{>0}$ and scalar $\sigma_i\in(0,1]$ such that the following holds.
    For any constant reference position $r_i\in\R^2$ and any initial time $\tau\in\R_{\geq 0}$, define the point
    \begin{equation}
        s_i(\tau) = (1-\sigma_i)p_i(\tau) + \sigma_i r_i.
    \end{equation}
    Then there exists an input signal $u_i(t)$ for which the corresponding solution satisfies, for all $t\geq\tau$,
    \begin{align} \label{eq:seg_expo_bound}
        \| p_i(t)-s_i(\tau) \| &\leq C_i e^{-\lambda_i(t-\tau)} \| p_i(\tau)-s_i(\tau) \|.
    \end{align}
\end{assumption}
This assumption relaxes Assumption~\ref{as:expo_stable} by requiring exponential convergence only to a point $s_i(\tau)$ on the line segment joining the initial position $p_i(\tau)$ and the reference $r_i$, rather than to the reference itself.
The parameter $\sigma_i$ determines how far that point lies toward $r_i$ and the case $\sigma_i=1$ recovers Assumption~\ref{as:expo_stable}.
This is illustrated in Figure~\ref{fig:set_comp}.
The dashed circle depicts the guaranteed end-of-interval region under Assumption~\ref{as:expo_stable}.
The entire blue shaded area shows the enlarged guaranteed end-of-interval region under Assumption~\ref{as:expo_stable_relaxed}.
\begin{figure}
    \centering
    \includegraphics{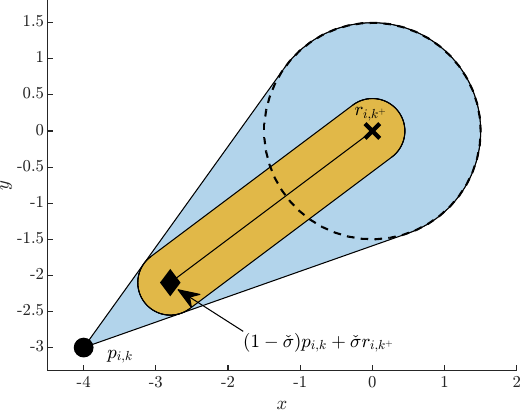}
    \caption{Comparison of guaranteed end-of-interval regions.
    Under Assumption~\ref{as:expo_stable}, this is for agent $i$, a ball centered at $r_{i,k^+}$ shown as the dashed circle.
    Assumption~\ref{as:expo_stable_relaxed} relaxes the guaranteed end-of interval region to the entire blue shaded area.
    The yellow set is the guaranteed end-of-interval region under which Theorem~\ref{thm:relaxed} formulates a sufficient consensus condition.}
    \label{fig:set_comp}
\end{figure}

We define the segment point
\begin{equation} \label{eq:s_i_def}
    s_{i,k^+} = (1-\sigma_{i,k})p_{i,k} + \sigma_{i,k} r_{i,k^+}.
\end{equation}
Evaluating~\eqref{eq:seg_expo_bound} at $t=t_{k+1}$ and using $t_{k+1}-t_k \geq \check T$, we obtain
\begin{equation} \label{eq:p_s_expo}
    \| p_{i,k+1}-s_{i,k^+} \| \leq \beta_i \| p_{i,k}-s_{i,k^+} \|,
\end{equation}
where $\beta_i = C_ie^{-\lambda_i \check T}$ as in~\eqref{eq:beta_i_def}.

Similar to the shifted variables $\tilde p_{i,k}$ and $\tilde r_{i,k^+}$, we define $\tilde s_{i,k^+} = s_{i,k^+} - d_i$ and the stacked vector $\tilde s_{k^+} = [\tilde s_{1,k^+}^\top, \dots, \tilde s_{n,k^+}^\top]^\top$.
Using~\eqref{eq:s_i_def} and the jump update $\tilde r_{k^+}=(H_k\otimes I_2)\tilde p_k$, we obtain
\begin{align}
    \tilde s_{k^+} &= \big( (I_n-\Sigma_k)\otimes I_2 \big) \tilde p_k + \big(\Sigma_k\otimes I_2 \big) \tilde r_{k^+} \nonumber\\
    &=\big( (I_n-\Sigma_k)\otimes I_2 \big) \tilde p_k + \big((\Sigma_k H_k)\otimes I_2\big) \tilde p_k \nonumber \\
    &=\big( H_k^\sigma\otimes I_2 \big) \tilde p_k,
    \label{eq:s_tilde}
\end{align}
where $\Sigma_k=\diag(\sigma_{1,k},\dots,\sigma_{n,k})$ and 
\begin{equation} \label{eq:H_sigma_def}
    H_k^\sigma = I_n-\Sigma_k + \Sigma_k H_k.
\end{equation}
Note that $H_k^\sigma$ is row-stochastic since $H_k$ is row-stochastic and each row of $H_k^\sigma$ is a convex combination of the corresponding rows of $I_n$ and $H_k$.
The following lemma quantifies how this modification degrades the ergodicity coefficient of $L$-step products.
\begin{lemma} \label{lemma:mixing_mod}
    Under Assumption~\ref{as:mixing} and the definition of $H_k^\sigma$ in~\eqref{eq:H_sigma_def}, it holds for all $k\in\N_0$ that
    \begin{equation}
        \tau_1(H_{k+L-1}^\sigma\cdots H_k^\sigma) \leq 1-\check\sigma^L\mu,
    \end{equation}
    where $\check\sigma=\inf_{i,k} \sigma_{i,k}$ and $L,\mu$ as given in Assumption~\ref{as:mixing}.
\end{lemma}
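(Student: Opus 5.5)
The plan is to work with the overlap form of the coefficient of ergodicity and to compare the $L$-step product of the modified matrices entrywise with the original product. For row-stochastic $A$ and rows $i,j$ we have $|a-b| = a+b-2\min(a,b)$ for scalars $a,b$. Summing over $s$ and using that both rows sum to one gives
\begin{equation}
    \tfrac{1}{2}\sum_{s=1}^n |A_{is}-A_{js}| = 1-\sum_{s=1}^n \min(A_{is},A_{js}).
\end{equation}
Hence
\begin{equation}
    \tau_1(A) = 1-\min_{i,j}\sum_{s=1}^n \min(A_{is},A_{js}).
\end{equation}
This shows that $\tau_1$ is antitone in a suitable sense: any entrywise lower bound on a row-stochastic $A$ translates into an upper bound on $\tau_1(A)$.

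Write $P_k = H_{k+L-1}\cdots H_k$ and $P_k^\sigma = H_{k+L-1}^\sigma\cdots H_k^\sigma$. Both are row-stochastic, since products of row-stochastic matrices are row-stochastic and each $H^\sigma_\ell$ was already noted to be row-stochastic.

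The key step is the entrywise inequality $P_k^\sigma \geq \check\sigma^L P_k$. It follows from two facts.
\begin{itemize}
    \item Each factor splits as $H_\ell^\sigma = (I_n-\Sigma_\ell) + \Sigma_\ell H_\ell$, and both terms are nonnegative because $\sigma_{i,\ell}\in(0,1]$.
    \item $\Sigma_\ell H_\ell \geq \check\sigma H_\ell$ entrywise, since $\Sigma_\ell$ is diagonal with entries at least $\check\sigma$ and $H_\ell$ is nonnegative.
\end{itemize}
Expanding the product of the $L$ binomial factors yields a sum of nonnegative matrices, because products of nonnegative matrices are nonnegative. Discarding every term except $\Sigma_{k+L-1}H_{k+L-1}\cdots\Sigma_k H_k$ gives
\begin{equation}
    P_k^\sigma \geq \Sigma_{k+L-1}H_{k+L-1}\cdots \Sigma_k H_k.
\end{equation}
Replacing each $\Sigma_\ell H_\ell$ by the smaller nonnegative matrix $\check\sigma H_\ell$ keeps the inequality, because multiplication of nonnegative matrices is monotone. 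This gives $P_k^\sigma \geq \check\sigma^L P_k$.

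Then, for any rows $i,j$,
\begin{equation}
    \sum_s \min([P_k^\sigma]_{is},[P_k^\sigma]_{js}) \geq \check\sigma^L \sum_s \min([P_k]_{is},[P_k]_{js}).
\end{equation}
By Assumption~\ref{as:mixing} and the identity above, the right-hand side is at least $\check\sigma^L\mu$. Taking the minimum over $i,j$ and applying the identity to $P_k^\sigma$ yields $\tau_1(P_k^\sigma)\leq 1-\check\sigma^L\mu$. If $\check\sigma=0$ the bound is trivial, because $\tau_1\leq 1$ for any row-stochastic matrix.

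I expect the main obstacle to be the entrywise domination step. One must be careful that every term in the expansion is nonnegative, so that dropping terms is legitimate. One must also check that bounding the diagonal factors by $\check\sigma$ respects the ordering of the factors in the product; it does, since we only need $\Sigma_\ell \geq \check\sigma I_n$ together with nonnegativity and monotonicity of matrix multiplication.
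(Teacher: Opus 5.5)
Your proof is correct and follows essentially the same route as the paper: you lift the entrywise domination $H_k^\sigma \geq \Sigma_k H_k \geq \check\sigma H_k$ to the $L$-step product, then use the overlap form $\tau_1(A) = 1-\min_{i,j}\sum_{s}\min\{[A]_{is},[A]_{js}\}$ together with Assumption~\ref{as:mixing}. The only differences are that you derive that identity directly where the paper cites Seneta's Theorem 3.1, and you spell out the product bound (via the binomial expansion) and the degenerate case $\check\sigma=0$, both of which the paper leaves implicit.
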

\begin{proof}
    We begin with the entry-wise inequalities
    \begin{equation}
        H_k^\sigma \geq \Sigma_k H_k \geq \check\sigma H_k.
    \end{equation}
    Consequently,
    \begin{equation}
        H_{k+L-1}^\sigma \cdots H_k^\sigma \geq \check\sigma^L H_{k+L-1}\cdots H_k.
    \end{equation}
    Let $\Pi_k = H_{k+L-1} \cdots H_k$ and $\Pi_k^\sigma = H_{k+L-1}^\sigma \cdots H_k^\sigma$.
    Then,
    \begin{align}
        \min_{i,j} \sum_{s=1}^n &\min \{ [\Pi_k^\sigma]_{is},[\Pi_k^\sigma]_{js} \}  \nonumber\\
        &\geq \check\sigma^L\min_{i,j} \sum_{s=1}^n \min \{ [\Pi_k]_{is},[\Pi_k]_{js} \}.
    \end{align}
    By~\cite[Theorem 3.1]{seneta2006}, the coefficient of ergodicity of any row-stochastic matrix $A$ can equivalently be expressed as
    \begin{equation}
        \tau_1(A) = 1-\min_{i,j} \sum_{s=1}^n \min \{ [A]_{is},[A]_{js} \}.
    \end{equation}
    By that and Assumption~\ref{as:mixing}, we have
    \begin{equation}
        \min_{i,j} \sum_{s=1}^n \min \{ [\Pi_k]_{is},[\Pi_k]_{js} \} \geq \mu.
    \end{equation}
    Combining the above yields
    \begin{equation}
        \tau_1(H_{k+L-1}^\sigma \cdots H_k^\sigma) \leq 1-\check\sigma^L\mu.
    \end{equation}
\end{proof}
Lemma~\ref{lemma:mixing_mod} therefore shows that relaxing Assumption~\ref{as:expo_stable}, and thereby enlarging the required end-of-interval region by allowing $\sigma_{i,k}\leq 1$, reduces the guaranteed disagreement contraction from $\mu$ to $\check\sigma^L\mu$.

\begin{theorem} \label{thm:relaxed}
    Suppose Assumptions~\ref{as:mixing} and~\ref{as:expo_stable_relaxed} are satisfied.
    Then, the jump-flow system defined by~\eqref{eq:jump_r}-\eqref{eq:flow_r}, \eqref{eq:s_i_def}, and \eqref{eq:p_s_expo} achieves the desired formation in the sense of~\eqref{eq:form_cond_convergence}, if for all $i\in\mathcal{N}$ and for all $k\in\N_0$,
    \begin{align}
        \sigma_{i,k}\beta_i &< \frac{1}{2}\big((1+\check\sigma^L\mu)^{\tfrac{1}{L}}-1\big).
    \end{align}
\end{theorem}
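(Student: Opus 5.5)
The plan is to follow the proof of Theorem~\ref{thm:general}, with the segment point $\tilde s_{k^+}$ playing the role of $\tilde r_{k^+}$ and $H_k^\sigma$ playing the role of $H_k$.

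\emph{Perturbation decomposition.} Set $w_k \coloneqq \tilde p_{k+1}-\tilde s_{k^+}$. By~\eqref{eq:s_tilde} we obtain the recursion
\begin{equation}
\tilde p_{k+1} = (H_k^\sigma\otimes I_2)\tilde p_k + w_k .
\end{equation}
Bounding each component of $w_k$ uses~\eqref{eq:p_s_expo} together with~\eqref{eq:s_i_def}. Since $p_{i,k}-s_{i,k^+}=\sigma_{i,k}(p_{i,k}-r_{i,k^+})$, which is invariant under the shift by $d_i$, we get
\begin{equation}
\|\tilde p_{i,k+1}-\tilde s_{i,k^+}\| \le \beta_i\sigma_{i,k}\Big\|\tilde p_{i,k}-\sum_j h_{ij,k}\tilde p_{j,k}\Big\| \le \sigma_{i,k}\beta_i\,\Delta(\tilde p_k),
\end{equation}
where the last step reuses~\eqref{eq:p_r_Delta_ineq}. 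Let $\hat\gamma \coloneqq \sup_{i,k}\sigma_{i,k}\beta_i$. Then, exactly as in~\eqref{eq:Delta_w}, $\Delta(w_k)\le 2\hat\gamma\,\Delta(\tilde p_k)$.

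\emph{Window estimate.} Iterate the recursion over $L$ steps and apply $\Delta$. Use $\tau_1(\cdot)\le 1$ on the partial products multiplying each $w$ term. Use Lemma~\ref{lemma:mixing_mod} to bound the full product by $1-\check\sigma^L\mu$. The one-step bound $\Delta(\tilde p_{k+1})\le(1+2\hat\gamma)\Delta(\tilde p_k)$ controls the intermediate disagreements. Together these give
\begin{equation}
\Delta(\tilde p_{k+L}) \le \Big[(1-\check\sigma^L\mu) + \big((1+2\hat\gamma)^L-1\big)\Big]\Delta(\tilde p_k).
\end{equation}
The bracket is strictly less than $1$ iff $\hat\gamma<\tfrac12\big((1+\check\sigma^L\mu)^{1/L}-1\big)$, which is the hypothesis.

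\emph{Main obstacle.} The hypothesis is stated pointwise for all $i,k$, but the argument needs a uniform strict margin. I would handle this by reading the condition as holding for the supremum $\hat\gamma$, or note that the $\sigma_{i,k}$ should range over a set whose supremum preserves the strict inequality; this is the point to state carefully. One also needs $\check\sigma>0$ so that Lemma~\ref{lemma:mixing_mod} is nontrivial. I take this as implicit in Assumption~\ref{as:expo_stable_relaxed}: $\check\sigma\ge\min_i\sigma_i>0$ if $\sigma_{i,k}=\sigma_i$, and otherwise I would assume it.

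\emph{Conclusion.} With contraction factor $\rho<1$ per window, $\Delta(\tilde p_{k})\to 0$ geometrically, and between windows it is bounded by $(1+2\hat\gamma)^{L}$ times the window value. For continuous time, on $(t_k,t_{k+1}]$ we have $\|p_i(t)-s_{i,k^+}\|\le C_i\sigma_{i,k}\|p_{i,k}-r_{i,k^+}\|\le C_i\Delta(\tilde p_k)$, and likewise $\|\tilde s_{i,k^+}-\tilde p_{i,k}\|\le\Delta(\tilde p_k)$, so $\Delta(\tilde p(t))\to0$. Formation in the sense of~\eqref{eq:form_cond_convergence} then follows by taking $c(t)$ to be, e.g., $\tilde p_1(t)$, exactly as the paper concluded from $\Delta\to0$ in Theorem~\ref{thm:general}.
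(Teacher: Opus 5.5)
Your proposal is correct and follows the paper's proof essentially step for step: the recursion $\tilde p_{k+1}=(H_k^\sigma\otimes I_2)\tilde p_k+w_k$, the bound $\Delta(w_k)\le 2\sup_{i,k}\sigma_{i,k}\beta_i\,\Delta(\tilde p_k)$ from~\eqref{eq:p_s_expo} and~\eqref{eq:p_r_Delta_ineq}, Lemma~\ref{lemma:mixing_mod} on the $L$-step window, and the same $(1+2\hat\gamma)^L-1$ accumulation, with your inter-sample estimate as a small addition the paper omits. The uniform-margin issue you flag is genuine: the paper's closing sentence passes from the pointwise condition to the supremum without justification, which is harmless when the supremum is attained, e.g.\ for time-invariant $\sigma_{i,k}=\sigma_i$. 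On the other hand, $\check\sigma>0$ needs no extra assumption, since if $\check\sigma=0$ the right-hand side equals $0$ and the hypothesis $\sigma_{i,k}\beta_i<0$ cannot hold.
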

\begin{proof}
    The proof of this theorem largely follows the proof of Theorem~\ref{thm:general}.
    
    By~\eqref{eq:s_i_def}, we have $\| p_{i,k}-s_{i,k^+} \|=\sigma_{i,k}\| p_{i,k}-r_{i,k^+} \|$, and therefore the bound in~\eqref{eq:p_s_expo} yields
    \begin{equation} \label{eq:p_s_r_bound}
        \| p_{i,k+1}-s_{i,k^+} \| \leq \beta_i \sigma_{i,k}\| p_{i,k}-r_{i,k^+} \|.
    \end{equation}
    Since $\|p_{i,k}-r_{i,k^+}\| = \|\tilde p_{i,k}-\tilde r_{i,k^+}\|$, applying~\eqref{eq:p_r_Delta_ineq} gives,
    \begin{equation}
        \| p_{i,k+1}-s_{i,k^+} \| \leq \beta_i\sigma_{i,k} \Delta(\tilde p_k).
    \end{equation}

    By defining $v_k=\tilde p_{k+1} - \tilde s_{k^+}$, we can therefore establish the bound
    \begin{align}
        \Delta(v_k) \leq 2\hat\delta \Delta(\tilde p_k),
    \end{align}
    with $\hat\delta = \sup_{i,k} \beta_i\sigma_{i,k}$.
    The steps to obtain this bound are analogous to~\eqref{eq:Delta_w}.
    
    Using the definitions of $v_k$ and $\tilde s_{k^+}$ in~\eqref{eq:s_tilde}, we can write the one-step recursion
    \begin{equation}
        \tilde p_{k+1} = (H_k^\sigma \otimes I_2)\tilde p_k + v_k.
    \end{equation}
    Iterating this over an $L$-step window, applying the seminorm $\Delta$, and using Lemma~\ref{lemma:mixing_mod} gives
    \begin{equation}
        \Delta(\tilde p_{k+L}) \leq (1-\check\sigma^L\mu)\Delta(\tilde p_k) + \big((1+2\hat\delta)^L-1\big)\Delta(\tilde p_k).
    \end{equation}
    Convergence to the formation in the sense of~\eqref{eq:form_cond_convergence} is therefore ensured whenever
    \begin{equation}
        (1-\check\sigma^L\mu) + \big((1+2\hat\delta)^L-1\big) < 1
    \end{equation}
    or equivalently
    \begin{equation}
        \hat\delta < \frac{1}{2}\big((1+\check\sigma^L\mu)^{\tfrac{1}{L}}-1\big).
    \end{equation}
    Since by definition $\forall i,\,\beta_i\sigma_{i,k}\leq \hat\delta$, it suffices to require
    \begin{equation}
        \sigma_{i,k}\beta_i < \frac{1}{2}\big((1+\check\sigma^L\mu)^{\tfrac{1}{L}}-1\big),
    \end{equation}
    which is the stated sufficient condition.
    This concludes the proof.
\end{proof}

Theorem~\ref{thm:relaxed} permits a geometric interpretation.
The bound in~\eqref{eq:p_s_r_bound} states that, at the end of each flow interval, the position $p_{i,k+1}$ lies inside a ball centered at $s_{i,k^+}$ with radius
\begin{equation} \label{eq:radius}
    R_{i,k} = \beta_i\sigma_{i,k} \| p_{i,k}-r_{i,k^+} \|.
\end{equation}
Thus, the product $\beta_i\sigma_{i,k}$ acts as a radius multiplier that quantifies how far the agent may deviate from the segment point $s_{i,k^+}$.

The parameter $\sigma_{i,k}$ can be interpreted as a certificate of the tracking control on flow interval $k$.
It certifies that the closed-loop motion in each flow interval admits exponential convergence to a point on the line segment between the position at the last communication instant $p_{i,k}$ and the received reference $r_{i,k^+}$.
In this view, agents continue to track $r_{i,k^+}$, however, when an agent cannot track $r_{i,k^+}$ sufficiently fast, this analysis may certify convergence to a closer point on the segment, corresponding to a smaller $\sigma_{i,k}$.

On the other hand, $\beta_i$ represents the contraction factor to the segment point itself.
Since the condition in Theorem~\ref{thm:relaxed} has to hold for $\check\sigma$ as well, we can define
\begin{equation}
    \beta^* = \frac{1}{2\check\sigma}\big((1+\check\sigma^L\mu)^{\tfrac{1}{L}}-1\big).
\end{equation}
Consequently, by Theorem~\ref{thm:relaxed}, convergence is assured if $\sigma_{i,k}\beta_i<\check\sigma\beta^*$ is satisfied for all $i\in\mathcal{N}$ and $k\in\N_0$.
Substituting this into the definition of the radius in~\eqref{eq:radius} yields
\begin{equation}
    R_{i,k} < \check\sigma\beta^* \| p_{i,k}-r_{i,k^+} \|,
\end{equation}
which is constant in each flow interval for any choice of $\sigma_{i,k}$.

This yields the following picture.
Instead of requiring every agent to end each interval inside a ball centered at $r_{i,k^+}$, it suffices that they end up within the distance of $R_{i,k}$ to an arbitrary point in between $(1-\check\sigma)p_{i,k}+\check\sigma r_{i,k^+}$ and $r_{i,k^+}$, which can be geometrically interpreted as a rounded tube around that segment with radius $R_{i,k}$.
An example of this consequence is illustrated in Figure~\ref{fig:set_comp} by the yellow shaded region, where to simplify the illustration we take the radius of the dashed circle to be $\beta^*$.

Importantly, the ability to certify $\sigma_{i,k}<1$ depends on the shape of the tracking transient.
Controllers that keep the agent's position near the line segment joining $p_{i,k}$ and $r_{i,k^+}$, make it easier to certify the bound on $\sigma_{i,k}\beta_i$.
In this sense, beyond enlarging $\check T$, the tracking controller itself provides a second lever for guaranteeing consensus and thereby convergence to the desired formation.

Consequently, Theorem~\ref{thm:relaxed} generalizes not only Theorem~\ref{thm:general}, but also the findings in~\cite{molinari2019}, where the tracking control satisfied $\forall k,\forall i,\, \sigma_{i,k}\beta_i=0$ for any choice of $\sigma_{i,k}\in(0,1]$, i.e., the agents moved along the line connecting $p_{i,k}$ and $r_{i,k^+}$, which immediately satisfies the condition in Theorem~\ref{thm:relaxed}.

%% file: src/06_simulation.tex
\section{Simulation Results} \label{sec:simulation}
In this section, we present numerical simulations of swarms implementing different control laws to underline the results in Theorems~\ref{thm:general} and~\ref{thm:relaxed}.
To this end, we consider a swarm of $n=6$ agents with unicycle dynamics given by
\begin{equation}
\begin{cases}
    \dot x_i(t) = v_i(t)\cos\theta_i(t) \\
    \dot y_i(t) = v_i(t)\sin\theta_i(t) \\
    \dot \theta_i(t) = \omega_i(t), 
\end{cases}
\end{equation}
with position $p_i(t) = [x_i(t),y_i(t)]^\top$ and inputs $v_i(t)$ and $\omega_i(t)$.
In order to let the agents converge toward their reference positions during flow intervals $t\in(t_k,t_{k+1}]$, we use a feedback linearization control similar to the one described in~\cite{isleyen2023}.
We write $e_i(t)=[\cos\theta_i(t), \sin\theta_i(t)]^\top$ and define the point 
\begin{equation}
    q_i(t) = p_i(t) + \varepsilon_i(t) e_i(t)
\end{equation}
with $\varepsilon_i(t) = \varepsilon(p_i(t))=\kappa_\varepsilon \| p_i(t)-r_{i,k^+} \|$, $\kappa_\varepsilon \in (0,1)$.
We then define the variable $z_i(t) \coloneqq q_i(t)-r_{i,k^+}$ and 
\begin{equation} \label{eq:vec_field}
    F_i(z_i) \coloneqq -\gamma_i z_i + \mu_iJz_i,
\end{equation}
where $\gamma_i\in\R_{>0},\mu_i\in\R$ are control gains and $J= \big(\begin{smallmatrix} 0 & -1\\ 1 & 0 \end{smallmatrix}\big)$.
The control inputs are then calculated by
\begin{align}
    v_i(t) &= \frac{e_i(t)^\top F_i(z_i(t))}{1+\kappa_\varepsilon e_i(t)^\top\frac{p_i(t)-r_{i,k^+}}{\|p_i(t)-r_{i,k^+}\|}} \\
    \omega_i(t) &= \frac{1}{\varepsilon_i(t)} e_i(t)^\top F_i(z_i(t)).
\end{align}
Note that the first term in \eqref{eq:vec_field} is a contracting feedback toward $z_i=0$, whereas the second term induces a rotation around the origin through $J$.
The gain $\mu_i$ therefore controls the induced circular motion.
Consequently, for smaller values of $\mu_i$, the agent takes a more direct path towards its reference.
By arguments analogous to~\cite{isleyen2023}, the agent's position $p_i(t)$ remains exponentially stable around the reference $r_{i,k^+}$ for any $\gamma_i>0$.

In the following, we run three simulations implementing the reference updates described in Section~\ref{sec:control} together with the agent dynamics introduced above.
We generate a random sequence of strongly connected matrices $H_k$ satisfying Assumption~\ref{as:mixing} and consider a fixed flow interval length, i.e., $\check T=\hat T = T$.
For each agent, we sample $a_i\sim\mathcal{U}(0.4,1)$ independently and set $\gamma_i = -10 \ln a_i$.
Furthermore, in all experiments we set $\forall i,\, \mu_i=\mu$.

The initial positions $p_i(0)$ and headings $\theta_i(0)$ are drawn uniformly in $[-5,5]\times[-5,5]$ and $[0,2\pi)$, respectively.
The desired formation is a regular hexagon of radius $5$, i.e., $d_i = 5\left[\cos\frac{2\pi(i-1)}{6},\sin \frac{2\pi(i-1)}{6}\right]^\top$.

For the first two simulations we fix $T=\SI{0.1}{\second}$.
In the first run we choose $\forall i,\, \mu=0$, while in the second run we set $\mu=\tfrac{\pi}{2T}$.
The third simulation also uses the rotational gain $\mu=\tfrac{\pi}{2T}$, but increases the interval length to $T=\SI{1}{\second}$.

To quantify the convergence to the desired formation, we compute at each communication instant the mean squared error
\begin{equation}
    \mathrm{MSE}(\tilde p_{1,k},\dots,\tilde p_{n,k}) = \frac{1}{n} \sum_{i=1}^n \| \tilde p_{i,k}-\bar p_{k} \|^2,
\end{equation}
where $\bar p_k$ is the arithmetic mean of $\tilde p_{1,k},\dots,\tilde p_{n,k}$.
The formation is achieved when this error converges to zero.

Figure~\ref{fig:sim_n6_mse} shows the formation error over time for the three simulation runs.
\begin{figure}
    \centering
    \includegraphics{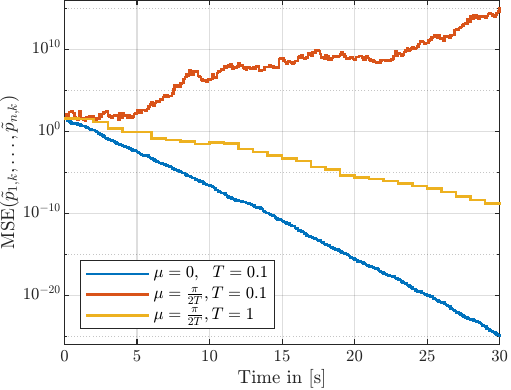}
    \caption{Evolution of the formation error for varying choices of $\mu$ and $T$.}
    \label{fig:sim_n6_mse}
\end{figure}
In the first simulation ($\mu=0$, $T=\SI{0.1}{\second}$), the agents reach consensus and the swarm converges to the desired formation.
In contrast, in the second simulation ($\mu=\tfrac{\pi}{2T}$, $T=\SI{0.1}{\second}$) the formation error diverges.
This behavior is consistent with the geometric interpretation of Theorem~\ref{thm:relaxed}. 
Increasing $\mu$ introduces a stronger tangential component, causing trajectories to deviate further from the line segment between the previous position and the updated reference, which in turn makes it harder to satisfy the inequality in Theorem~\ref{thm:relaxed}.

However, convergence is recovered in the third experiment by increasing the time in between communication instants ($\mu=\tfrac{\pi}{2T}$, $T=\SI{1}{\second}$).
This matches the prediction of Theorem~\ref{thm:general}, which guarantees convergence for sufficiently long flow intervals.

Lastly, we analyze the communication efficiency of the proposed protocol.
For $T=\SI{0.1}{\second}$, the $\SI{30}{\second}$ simulation contains $300$ communication instants.
Since all agents transmit simultaneously and three scalars must be communicated at each update, the communication protocol requires three orthogonal channels per update, which amounts to $900$ orthogonal transmissions in total.

For comparison, under a node-to-node protocol each active communication link would require two orthogonal channels.
Counting all links over the simulation horizon then yields $6018$ orthogonal transmissions.
Therefore, the employed protocol provides a substantial reduction in the number of required orthogonal channels.

%% file: src/07_conclusion.tex
\section{Conclusion} \label{sec:conclusion}
In this paper we proposed a consensus-based formation control scheme for heterogeneous agents.
Instead of imposing explicit assumptions on the agent dynamics, we only require exponential stabilizability of the position with respect to a constant reference.
We derived sufficient conditions that guarantee convergence to the desired formation.
In particular, convergence can be ensured by choosing the time in between communication updates sufficiently large.
Moreover, convergence can also be established through an appropriate choice of tracking control laws that yield favorable transient geometry.
In this way we extend the results in~\cite{borzone2018} and~\cite{molinari2019}.
Numerical simulations illustrate the theoretical findings.

A key element of the approach is the OTA broadcast protocol.
The protocol exploits the superposition property of the wireless channel to compute normalized convex combinations from simultaneous transmissions.
This yields a substantial reduction in the required number of orthogonal channels when compared to traditional interference-avoiding node-to-node communication schemes.

In future work, we plan to incorporate additional adverse communication effects, including additive noise and half-duplex communication.
In addition, we will study the impact of imperfect position measurements and other sensing limitations as well as explicit agent constraints, such as collision avoidance, on the convergence guarantees.

%% file: lib.bib
@book{seneta2006,
  title = {Non-Negative Matrices and Markov Chains},
  author = {Seneta, E.},
  year = 2006,
  series = {Springer Series in Statistics},
  edition = {2nd},
  publisher = {Springer},
  address = {New York, NY},
  doi = {10.1007/0-387-32792-4},
}

@inproceedings{molinari2018,
    author={Molinari, Fabio and Stanczak, Slawomir and Raisch, Jörg},
    booktitle={2018 European Control Conference (ECC)}, 
    title={Exploiting the Superposition Property of Wireless Communication For Average Consensus Problems in Multi-Agent Systems}, 
    year={2018},
    volume={},
    number={},
    pages={1766-1772},
    doi={10.23919/ECC.2018.8550555}
}

@inproceedings{epp2024,
  title = {Exploiting Over-The-Air Consensus for Collision Avoidance and Formation Control in Multi-Agent Systems},
  booktitle = {2024 IEEE 63rd Conference on Decision and Control (CDC)},
  author = {Epp, Michael and Molinari, Fabio and Raisch, J{\"o}rg},
  year = 2024,
  month = dec,
  pages = {5417--5423},
  issn = {2576-2370},
  doi = {10.1109/CDC56724.2024.10886580},
}

@inproceedings{molinari2019,
    author={Molinari, Fabio and Raisch, Jörg},
    booktitle={2019 IEEE 58th Conference on Decision and Control (CDC)}, 
    title={Efficient Consensus-based Formation Control With Discrete-Time Broadcast Updates}, 
    year={2019},
    volume={},
    number={},
    pages={4172-4177},
    doi={10.1109/CDC40024.2019.9029346}
}

@inproceedings{isleyen2023,
  title = {Adaptive Headway Motion Control and Motion Prediction for Safe Unicycle Motion Design},
  booktitle = {2023 62nd IEEE Conference on Decision and Control (CDC)},
  author = {\.I{\c s}leyen, Aykut and Van De Wouw, Nathan and Arslan, \"Om\"ur},
  year = 2023,
  month = dec,
  pages = {6942--6949},
  issn = {2576-2370},
  doi = {10.1109/CDC49753.2023.10383801},
}

@inproceedings{borzone2018,
  title = {Hybrid Framework for Consensus in Fleets of Non-Holonomic Robots},
  booktitle = {2018 Annual American Control Conference (ACC)},
  author = {Borzone, T. and Morarescu, I.-C. and Jungers, M. and Boc, M. and Janneteau, C.},
  year = 2018,
  month = jun,
  pages = {4299--4304},
  publisher = {IEEE},
  address = {Milwaukee, WI},
  doi = {10.23919/ACC.2018.8430918},
}

@article{ren2007,
  title = {Information Consensus in Multivehicle Cooperative Control},
  author = {Ren, Wei and Beard, Randal W. and Atkins, Ella M.},
  year = 2007,
  month = apr,
  journal = {IEEE Control Systems Magazine},
  volume = {27},
  number = {2},
  pages = {71--82},
  issn = {1941-000X},
  doi = {10.1109/MCS.2007.338264},
}

@article{gulzar2018,
  title = {Multi-Agent Cooperative Control Consensus: A Comparative Review},
  shorttitle = {Multi-Agent Cooperative Control Consensus},
  author = {Gulzar, Muhammad Majid and Rizvi, Syed Tahir Hussain and Javed, Muhammad Yaqoob and Munir, Umer and Asif, Haleema},
  year = 2018,
  month = feb,
  journal = {Electronics},
  volume = {7},
  number = {2},
  issn = {2079-9292},
  doi = {10.3390/electronics7020022},
  copyright = {http://creativecommons.org/licenses/by/3.0/},
  langid = {english},
}

@article{olfati-saber2007,
  title = {Consensus and Cooperation in Networked Multi-Agent Systems},
  author = {Olfati-Saber, Reza and Fax, J. Alex and Murray, Richard M.},
  year = 2007,
  month = jan,
  journal = {Proceedings of the IEEE},
  volume = {95},
  number = {1},
  pages = {215--233},
  issn = {0018-9219},
  doi = {10.1109/JPROC.2006.887293},
  urldate = {2025-07-30},
}

@article{falconi2015,
  title = {Edge-Weighted Consensus-Based Formation Control Strategy with Collision Avoidance},
  author = {Falconi, Riccardo and Sabattini, Lorenzo and Secchi, Cristian and Fantuzzi, Cesare and Melchiorri, Claudio},
  year = 2015,
  month = feb,
  journal = {Robotica},
  volume = {33},
  number = {2},
  pages = {332--347},
  issn = {0263-5747, 1469-8668},
  doi = {10.1017/S0263574714000368},
  urldate = {2026-01-31},
}

@article{ren2007b,
  title = {Consensus Strategies for Cooperative Control of Vehicle Formations},
  author = {Ren, W.},
  year = 2007,
  month = mar,
  journal = {IET Control Theory \& Applications},
  volume = {1},
  number = {2},
  pages = {505--512},
  issn = {1751-8644, 1751-8652},
  doi = {10.1049/iet-cta:20050401},
  urldate = {2026-02-05},
  langid = {english}
}

@article{xie2014,
  title = {Position Centroid Rendezvous and Centroid Formation of Multiple Unicycle Agents},
  author = {Xie, Wenjing and Ma, Baoli},
  year = 2014,
  journal = {IET Control Theory \& Applications},
  volume = {8},
  number = {17},
  pages = {2055--2061},
  issn = {1751-8652},
  doi = {10.1049/iet-cta.2013.0940},
}

@inproceedings{dimarogonas2008,
  title = {On the Stability of Distance-Based Formation Control},
  booktitle = {2008 47th IEEE Conference on Decision and Control},
  author = {Dimarogonas, Dimos V. and Johansson, Karl H.},
  year = 2008,
  month = dec,
  pages = {1200--1205},
  issn = {0191-2216},
  doi = {10.1109/CDC.2008.4739215},
}

@article{tahbaz-salehi2008,
  title = {A Necessary and Sufficient Condition for Consensus over Random Networks},
  author = {Tahbaz-Salehi, Alireza and Jadbabaie, Ali},
  year = 2008,
  month = apr,
  journal = {IEEE Transactions on Automatic Control},
  volume = {53},
  number = {3},
  pages = {791--795},
  issn = {0018-9286, 1558-2523, 2334-3303},
  doi = {10.1109/TAC.2008.917743},
}

@article{chatterjee1977,
  title = {Towards Consensus: Some Convergence Theorems on Repeated Averaging},
  author = {Chatterjee, S. and Seneta, E.},
  year = 1977,
  month = mar,
  journal = {Journal of Applied Probability},
  volume = {14},
  number = {1},
  pages = {89--97},
  issn = {0021-9002, 1475-6072},
  doi = {10.2307/3213262},
}

@article{wang2024,
  title = {Over-the-Air Computation for 6G: Foundations, Technologies, and Applications},
  author = {Wang, Zhibin and Zhao, Yapeng and Zhou, Yong and Shi, Yuanming and Jiang, Chunxiao and Letaief, Khaled B.},
  year = 2024,
  month = jul,
  journal = {IEEE Internet of Things Journal},
  volume = {11},
  number = {14},
  pages = {24634--24658},
  issn = {2327-4662},
  doi = {10.1109/JIOT.2024.3405448},
}

@article{sordalen1995,
  title = {Exponential Stabilization of Nonholonomic Chained Systems},
  author = {Sordalen, O.J. and Egeland, O.},
  year = 1995,
  month = jan,
  journal = {IEEE Transactions on Automatic Control},
  volume = {40},
  number = {1},
  pages = {35--49},
  issn = {1558-2523},
  doi = {10.1109/9.362901},
}

@article{dewit1992,
  title = {Exponential Stabilization of Mobile Robots with Nonholonomic Constraints},
  author = {de Wit, C.C. and Sordalen, O.J.},
  year = 1992,
  month = nov,
  journal = {IEEE Transactions on Automatic Control},
  volume = {37},
  number = {11},
  pages = {1791--1797},
  issn = {1558-2523},
  doi = {10.1109/9.173153},
}

@article{goldenbaum2013,
  title = {Harnessing Interference for Analog Function Computation in Wireless Sensor Networks},
  author = {Goldenbaum, Mario and Boche, Holger and Sta{\'n}czak, S{\l}awomir},
  year = 2013,
  month = oct,
  journal = {IEEE Transactions on Signal Processing},
  volume = {61},
  number = {20},
  pages = {4893--4906},
  issn = {1941-0476},
  doi = {10.1109/TSP.2013.2272921},
}
